\newtheorem{example}{Example}
\newtheorem{proposition}{Proposition}
\newtheorem{proof}{Proof}
\newenvironment{coden}
	{\begin{tt}\begin{tabbing}12345\=12\=12\=12\=12\=12\=\kill}
	{\end{tabbing}\end{tt}}
\newcommand{\caplab}[2]{\caption{\label{#1} #2}}
\newcommand{\tabbegin}[1]{\begin{table*}[#1]\centering}
\newcommand{\tabend}{\end{table*}}
\newcommand{\figbegin}[1]{\begin{figure*}[#1]\centering}
\newcommand{\figend}[2]{\caplab{#1}{#2}\end{figure*}}
\newcommand\arcflow{\textsc{Arc-Flow}}
\begin{document}

\title{Bin Packing with Linear Usage Costs
}
\author[rvt1]{Hadrien Cambazard}
\ead{hadrien.cambazard@grenoble-inp.fr}
\author[rvt2]{Deepak Mehta\corref{cor1}}
\ead{deepak.mehta@insight-centre.org}
\author[rvt2]{Barry O'Sullivan}
\ead{barry.osullivan@insight-centre.org}
\author[rvt2]{Helmut Simonis}
\ead{helmut.simonis@insight-centre.org}


\address[rvt1]{G-SCOP, Universit\'e de Grenoble; Grenoble INP; UJF Grenoble 1; CNRS, France}
\address[rvt2]{Insight Centre for Data Analytics, Universiy College Cork, Ireland}


\begin{abstract}
Bin packing is a well studied problem involved in many applications.
The classical bin packing problem is about minimising the number of bins
and ignores how the bins are utilised. 
We focus in this paper, on a variant of bin packing that is at the heart of efficient management of data centres. 
In this context, servers can be viewed as bins and virtual machines as items. 
The efficient management of a data-centre involves minimising energy costs while ensuring service quality.
The assignment of virtual machines on servers and how these servers are utilised has a huge impact on the energy consumption. 
 We focus on a  bin packing problem where linear costs are associated to the use of bins to model the energy consumption. We study lower bounds based on Linear Programming 
and extend the bin packing global constraint with  cost information. 
\end{abstract}

 \begin{keyword}
 Bin Packing, Constraint Programming
 \end{keyword}

\maketitle
\section{Introduction}
\label{sec-introduction}

Data centres are a critical and ubiquitous resource for providing infrastructure for banking, Internet and electronic commerce. 
They use enormous amounts of electricity,
 and this demand is expected to increase in the future.
For example, a report by the \emph{EU Stand-by Initiative} stated that in 2007 Western European data centres consumed 56 Tera-Watt Hours (TWh) of power, which is expected to almost double to 104 TWh per year by 2020.\footnote{\scriptsize \url{http://re.jrc.ec.europa.eu/energyefficiency/html/standby_initiative_data_centers.htm}}
Energy consumption is one of the most important sources of expense in data centres 
and energy efficiency is at the core of their competitive advantage.
 The ongoing increase in energy prices  (a 50\% increase is forecasted by the French senate by 2020) 
and the growing market for cloud computing are the  main incentives for the design of energy efficient centres.
Minimising energy consumption is not only important for economic reason but also for environmental reason. 
The consulting firm McKinsey reported that only 6-12\% of electricity used
 by data centres can be attributed to the performance of productive
 computation~\cite{nytimes}.
Hence, there is a lot of scope for reducing the carbon footprint of data centres
by utilising  servers more efficiently.

 We study a problem associated with the EnergeTIC\footnote{Minalogic EnergeTIC is a Global competitive cluster located in Grenoble
France and fostering research-led innovation in intelligent
miniaturised products and solutions for industry.  
} project  which was accredited by the French government (FUI)~\cite{energetic} that brought together four companies (Bull, Business \& Decision Eolas, Schneider Electric, UXP),  several academic partners (G2Elab, G-SCOP, LIG),  and a startup company (Vesta-System). 
The objective  is to control the energy consumption of a data
center and ensure that it is consistent with application needs, economic
constraints and service level agreements.
We focus on  how to reduce  energy cost by taking   {\scshape cpu} requirements of client applications, IT equipment and  virtualisation techniques into account. 


The EnergeTIC  optimisation problem for  energy-cost aware data centre
assignment systems is about  allocating virtual machines   demands
 to servers where  the energy cost per unit of computation can vary  between
  different servers.

The problem can be defined by 
 a set of servers and a set of virtual applications  to be run on those servers.
Each server is associated with a set of available resources, e.g. CPU, RAM, DISK etc.
Each virtual application  is associated with a set of required resource values.
The solution of the problem 
is an assignment of virtual machines to servers  which respects a set of hard constraints. 
The objective is to 
 take advantage of  differences in energy costs across the servers, 
 the requirements of virtual applications, 
and consolidate machine workload  to ensure that servers are well utilised so that energy costs can be reduced.

A combinatorial optimisation model for the problem of loading servers to a desired utilisation level has, at its core,  a variant of bin packing (BP) problem~\cite{Srikantaiah:2008}.
In such a model each server is represented by a bin with a capacity equal to the amount of resource available. 
Bin packing is a very well studied NP-Hard problem. 
In the present context, energy consumption cannot be accurately modelled by only considering  the number of active servers
 since the energy cost of a server is  
 a function of the workload.
Furthermore, servers require energy not only to run processes but also when they are idle which must be considered in the energy management process.
We  focus on an extension of bin packing problem with linear costs associated with the use of bins in Section~\ref{bplincost}. 
This is a key sub-problem of the application domain and we show how to handle it efficiently with Constraint Programming (CP). 
We study lower bounds based on Linear Programming 
and extend the bin packing global constraint with  cost information. The present work has been partially published in \cite{DBLP:conf/cp/CambazardMOS13} which focus on the real-life application.

\section{Bin Packing with Linear Usage Costs}
\label{bplincost}

In this section we consider a  variant of the Bin Packing problem (BP)  \cite{Srikantaiah:2008},  which is the key sub-problem of the application  investigated here. 
We denote by $S = \{w_1, \ldots, w_n\}$ the integer sizes of the $n$ items such that $w_1 \leq w_2 \leq \ldots w_n$. A bin $j$ is characterised by an integer capacity $C_j$, a non-negative fixed cost $f_j$ and a  non-negative  cost $c_j$ for each unit of used capacity. We denote by $B=\{\{C_1, f_1, c_1\},\ldots, \{C_m, f_m, c_m\}\}$ the characteristics of the $m$ bins. A bin is used when it contains at least one item. Its cost is a linear function $f_j + c_jl_j$, where $l_j$ is  the total size of the items  in bin $j$. The total load  is denoted by $W = \sum^{n}_{i = 1} w_i$ and the maximum capacity by $C_{max} = max_{1 \leq j \leq m} C_j$. The problem is to assign each item to a bin subject to the capacity constraints so that we minimise  the sum of the costs of all bins. We refer to this problem as the \emph{Bin Packing with Usage Cost} problem (BPUC).
BP  is a special case of BPUC where all $f_j$ are set to 1 and all $c_j$ to 0. The following example shows that a good solution for BP might not yield a good solution for  BPUC. 

\begin{example}\label{ex1} In Figure~\ref{example1}, Scenario~1,   B =\{(9,0,1),(3,0,2),(3,0,2),(3,0,2),(3,0,2)\} and S = \{2,2,2,2,3,3,3\}. Notice that $ \forall j, f_j = 0$. The packing $(P_1)$ : \{\{2,2,2,2\}, \{3\}, \{3\}, \{3\}, \{\}\} is using the minimum number of bins and has a  cost of 26 (8*1 + 3*2 + 3*2 + 3*2). The packing $(P_2)$: \{\{3,3,3\}, \{2\}, \{2\}, \{2\}, \{2\}\} is using one more bin but has a  cost of 25 (9 + 2*2 + 2*2 + 2*2 + 2*2). Here, ($P_2$) is better than ($P_1$) and using the minimum number of bins is not a good strategy. 
Now  change the last unit cost to $c_5=3$ (see Figure~\ref{example1}, Scenario 2). The cost of $(P_1)$ remains unchanged since it does not use bin number 5 but the cost of $(P_2)$ increases to 27, and thus  $(P_1)$ is now better than $(P_2)$.
\end{example}
\begin{figure}[h]
\centering
\includegraphics[width=12cm]{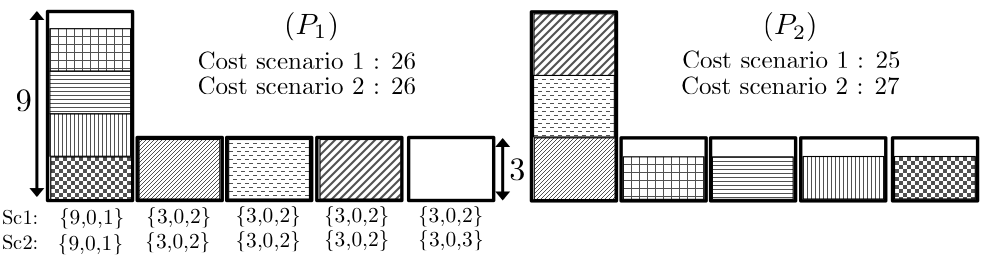}
\caption{Example of optimal solutions in two scenarios of costs. In Scenario 1, the best solution has no waste on the cheapest bin. In Scenario 2, it does not fill completely the cheapest bin.}
\label{example1}
\end{figure}

\noindent\textbf{Literature Review.} 
A first relevant extension of BP for the current paper is called Variable Size Bin-Packing, where bins have different capacities and the problem is to minimise the sum of the wasted space over all used bins~\cite{Monaci2002}. It can be seen as a special case of BPUC where all $f_j = C_j$ and $c_j = 0$. 
Recent lower bounds and an exact approach are examined in~\cite{Haouari:2011}.
A generalisation to any kind of fixed cost is presented in~\cite{Crainic2011}, which can be seen as a special case of BPUC where all $c_j = 0$. 
Concave costs of bin utilisation  studied in~\cite{Li2006} are more general than the linear cost functions of BPUC. 
However~\cite{Li2006} does not consider bins of different capacities and deals with the performance of classical BP heuristics whereas we are focusing on lower bounds and exact algorithms.
Secondly, BP with general cost structures have been introduced in~\cite{AnilyBS94} and studied in~\cite{Epstein:2012}. The authors investigated BP with non-decreasing and concave cost functions of the number of items put in a bin. They extend it with profitable optional items in~\cite{Baldi2012}. Their framework can model a fixed cost but does not relate to bin usage.


%

\newenvironment{equ}{\vspace{0.1cm}\begin{equation}}{\end{equation}\vspace{1mm}}

\section{Basic Formulation and  Lower Bounds of BPUC}
 Numerous linear programming models have been proposed for BP ~\cite{DBLP:journals/eor/Carvalho02}. We first present a 
 formulation for  BPUC.
  For each bin a binary variable $y_j$ is set to 1 if bin $j$ is used in the packing,
and 0 otherwise. For each item $i \in \{1,\ldots,n\}$ and each bin $j \in \{1,\ldots,m\}$  a binary variable $x_{ij}$ is set to 1 if item $i$ is packed into bin $j$, and 0 otherwise. 
We add non-negative variables $l_j$ representing the load  of each bin j. The  model is as follows:
\begin{equation}
\label{lp-model}
     \begin{array}{ll}
     \textbf{Minimize} \qquad z_1 = \sum_{j=1}^{m} (f_jy_j + c_jl_j) & \\     
	(1.1) \qquad \sum_{j=1}^m x_{ij}=1, & \forall i \in \{1,\ldots,n\} \\
	(1.2) \qquad \sum_{i=1}^{n} w_ix_{ij} = l_j, &  \forall j \in \{1,\ldots,m\} \\
	(1.3) \qquad l_j \leq C_jy_j , & \forall j \in \{1,\ldots,m\}  \\
	(1.4) \qquad x_{ij} \in \{0,1\},  y_j \in \{0,1\},  l_j \geq 0 \:\:\:&  \forall j \in \{1,\ldots,m\}, \forall i \in \{1,\ldots,n\} \\
     \end{array}
\end{equation}
Constraint (1.1) states that each item is assigned to  one bin whereas (1.2) and (1.3) enforce the capacity of the bins. We now  characterise the linear relaxation of the model. 
Let $r_j = f_j/C_j  + c_j$ be a real number associated with bin $j$. If bin $j$ is filled completely, $r_j$ is the cost of one unit of space in bin $j$. We sort the bins by non-decreasing $r_j$: $r_{a_1} \leq r_{a_2} \leq \ldots \leq r_{a_m}$; $a_1, \ldots, a_m$ is a permutation of the bin indices $1,\ldots, m$. Let  $k$ be the minimum number of bins such that $ \sum^{k}_{j = 1} C_{a_j} \geq W$.
\begin{proposition}
Let $z_1^*$ be the optimal value of the linear relaxation of the formulation~(1). We have $z_1^{*} \geq Lb_1$ with 
$Lb_1 = \sum^{k-1}_{j = 1} C_{a_{j}}r_{a_{j}} + (W - \sum^{k-1}_{j = 1} C_{a_j})r_{a_{k}}$. 
\end{proposition}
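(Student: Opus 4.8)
The plan is to reduce the linear relaxation to a simple fractional allocation problem on the loads $l_j$ alone, and then bound that problem by a greedy exchange argument. First I would observe that multiplying constraint $(1.1)$ by $w_i$, summing over $i$, and using $(1.2)$ gives $\sum_{j=1}^m l_j = \sum_{i=1}^n w_i \sum_{j=1}^m x_{ij} = \sum_{i=1}^n w_i = W$, so in every feasible solution the loads sum to the total size $W$. Moreover, $(1.3)$ together with $y_j \le 1$ forces $0 \le l_j \le C_j$ for each bin. Hence the load vector of any feasible relaxed solution lies in the polytope $P = \{l \in \R^m : \sum_{j=1}^m l_j = W,\ 0 \le l_j \le C_j \;\forall j\}$.

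Next I would eliminate the $y_j$ variables from the objective. Since $(1.3)$ can be rewritten as $y_j \ge l_j/C_j$ and $f_j \ge 0$, every feasible solution satisfies $f_j y_j \ge f_j l_j / C_j$. Substituting into the objective yields
\[
z_1 = \sum_{j=1}^m \bigl(f_j y_j + c_j l_j\bigr) \;\ge\; \sum_{j=1}^m \Bigl(\tfrac{f_j}{C_j} + c_j\Bigr) l_j \;=\; \sum_{j=1}^m r_j l_j .
\]
Thus it suffices to lower bound the linear function $\sum_{j=1}^m r_j l_j$ over $l \in P$; its minimum is then a valid lower bound on $z_1^*$.

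Finally I would minimise $\sum_{j=1}^m r_j l_j$ over $P$ by the standard exchange argument used for the continuous (fractional) knapsack. Because the objective is linear and the $r_j$ are exactly the per-unit costs, an optimal $l$ never places load on a bin with larger $r$ while a cheaper bin (smaller $r$) is not yet full: if it did, shifting a small amount of load from the costlier to the cheaper bin would strictly decrease the objective without leaving $P$. Sorting the bins as $r_{a_1} \le \dots \le r_{a_m}$, the minimiser therefore fills $a_1, \dots, a_{k-1}$ to capacity and puts the residual $W - \sum_{j=1}^{k-1} C_{a_j}$ into $a_k$, where $k$ is exactly the index defined before the statement. Its value is $\sum_{j=1}^{k-1} C_{a_j} r_{a_j} + (W - \sum_{j=1}^{k-1} C_{a_j}) r_{a_k} = Lb_1$, giving $z_1^* \ge Lb_1$.

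I expect the only delicate point to be the exchange argument of the last step: one must verify that the greedy \emph{fill the cheapest bins first} solution is genuinely optimal for $\sum_{j=1}^m r_j l_j$ over $P$, and that the residual-capacity bookkeeping around index $k$ is handled correctly, in particular that $0 < W - \sum_{j=1}^{k-1} C_{a_j} \le C_{a_k}$, which holds by the minimality of $k$. The reduction steps are routine once one notices that $y_j$ can be pushed down to $l_j/C_j$; in fact a transportation/flow feasibility argument shows that every $l \in P$ is realisable by suitable $x_{ij}$, so this bound is actually attained and $z_1^* = Lb_1$, though only the stated inequality is needed here.
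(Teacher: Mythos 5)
Your proposal is correct and follows essentially the same route as the paper: bound $f_j y_j$ from below by $f_j l_j / C_j$ using constraint $(1.3)$, reduce to minimising $\sum_j r_j l_j$ subject to $\sum_j l_j = W$ and $0 \le l_j \le C_j$, and observe that the greedy ``fill cheapest bins first'' allocation attains the minimum. Your write-up is merely more explicit than the paper's (deriving $\sum_j l_j = W$ from $(1.1)$--$(1.2)$ and spelling out the exchange argument), and your closing remark that the bound is attained is exactly the content of the paper's Proposition~2.
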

 \begin{proof}
 $z_1^* = \sum_{j=1}^{m} (f_jy_j + c_jl_j)  \geq \sum_{j=1}^{m} (f_j\frac{l_j}{C_j} + c_jl_j)$ because of the constraint $l_j \leq C_jy_j$, so $z_1^* \geq \sum_{j=1}^{m} (\frac{f_j}{C_j} + c_j)l_j \geq \sum_{j=1}^{m} r_jl_j $. $Lb_1$ is the quantity minimising $\sum_{j=1}^{m}r_jl_j$ under the constraints $\sum_{j} l_j = W$ where each $l_j \leq C_j$. To minimise the quantity  we must split W over the $l_j$ related to the smallest coefficients $r_j$. Hence, $z_1^* \geq \sum_{j=1}^{m} r_jl_j \geq Lb_1$. \qed
 \end{proof} 
$Lb_1$ is a lower bound of BPUC  that can be computed in $O(mlog(m) + n)$. Also notice that $Lb_1$  is  the bound that we get by solving the linear relaxation of formulation (1).
\begin{proposition}
 $Lb_1$ is the optimal value of the linear relaxation of the formulation~(1). 
\end{proposition}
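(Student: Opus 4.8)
The plan is to complement the previous proposition, which already gives $z_1^* \geq Lb_1$, by establishing the reverse inequality $z_1^* \leq Lb_1$. For this I would exhibit one explicit feasible solution of the linear relaxation of~(1) whose objective value is exactly $Lb_1$. Since the relaxation is a minimisation problem, its optimum is then at most $Lb_1$, and combining this with the bound of the previous proposition yields $z_1^* = Lb_1$.

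The natural candidate is the greedy ``fill-the-cheapest-bins-first'' (water-filling) assignment suggested by the very definition of $Lb_1$. Concretely, using the ordering $r_{a_1} \leq \ldots \leq r_{a_m}$, I would set the loads to $l_{a_j} = C_{a_j}$ for $j < k$, to $l_{a_k} = W - \sum_{j=1}^{k-1} C_{a_j}$ for the cut-off bin, and to $l_{a_j} = 0$ for $j > k$. By the minimality of $k$ we have $0 \leq l_{a_k} \leq C_{a_k}$ and $\sum_j l_j = W$. To make constraint~(1.3) tight I would choose $y_j = l_j / C_j \in [0,1]$, which is legal here precisely because the $y_j$ are no longer required to be binary in the relaxation. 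It then remains to supply fractional $x_{ij}$ satisfying $\sum_j x_{ij} = 1$ and $\sum_i w_i x_{ij} = l_j$; since $\sum_j l_j = W = \sum_i w_i$ and the $x_{ij}$ may be fractional, such an assignment always exists, for instance by scanning items and bins in order and splitting at most one item whenever a target load $l_j$ is reached.

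The value of this solution is obtained termwise: each completely filled bin $a_j$ (for $j<k$) contributes $f_{a_j} + c_{a_j} C_{a_j} = C_{a_j} r_{a_j}$, the cut-off bin $a_k$ contributes $l_{a_k} r_{a_k}$, and the remaining bins contribute $0$; summing reproduces $Lb_1$ exactly. This is no coincidence: the construction is engineered to turn every inequality in the proof of the previous proposition into an equality, namely constraint~(1.3) is tight (so $f_j y_j = f_j l_j / C_j$) and the load is concentrated on the smallest coefficients $r_{a_j}$ (so $\sum_j r_j l_j = Lb_1$). I expect the only delicate point to be the feasibility of the fractional $x_{ij}$, i.e. verifying that the prescribed marginals $\sum_j x_{ij} = 1$ and $\sum_i w_i x_{ij} = l_j$ are simultaneously realisable. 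This is exactly the place where relaxing integrality is essential, since the analogous integral packing of items into these loads need not exist.
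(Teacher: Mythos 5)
Your proposal is correct and follows essentially the same route as the paper: exhibit the water-filling solution with $l_{a_j}=C_{a_j}$ for $j<k$, $l_{a_k}=W-\sum_{j=1}^{k-1}C_{a_j}$, $y_j=l_j/C_j$, and read off the objective value $\sum_j r_j l_j = Lb_1$. The only difference is the realisation of the item variables, where the paper avoids your ``delicate point'' entirely by taking the uniform split $x_{i,a_j}=l_{a_j}/W$ for all $i$, which satisfies $\sum_j x_{ij}=1$ and $\sum_i w_i x_{ij}=l_j$ by inspection.
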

 \begin{proof} For all $j < k$, we set each $y_{a_j}$ to $1$ and $l_{a_j}$ to $C_j$.  We fix $l_{a_k}$ to $(W - \sum^{k-1}_{j = 1} C_{a_j})$ and $y_{a_k}$ to $l_{a_k}/C_{a_k}$. For all $j > k$ we set $y_{a_j} = 0$ and $l_{a_j} = 0$. Constraints (1.3) are thus satisfied. Finally we fix $x_{i,a_j} = \frac{l_{a_j}}{W}$ for all $i,j$ so that constraints (1.2) and (1.1) are satisfied. This is a feasible solution of the linear relaxation of (1) achieving an objective value of $Lb_1$. We have, therefore, $Lb_1 \geq z_1^*$ and consequently $z_1^*=Lb_1$ from Proposition~1.  \qed
 \end{proof}
Adding the constraint $x_{ij} \leq y_j$ for each item $i$ and bin $j$, strengthens the linear relaxation only if $W <  C_{a_k}$. Indeed, the solution given in the proof is otherwise feasible for the constraint, ($ \forall j < k,  \: x_{i,a_j} = \frac{l_{a_j}}{W}\leq y_{a_j} = 1$ and for $j=k$ we have $\frac{l_{a_k}}{W}  \leq \frac{l_{a_k}}{C_{a_k}}$ if $W \geq C_{a_k}$). 


\paragraph{Dominance and symmetries.} Formulation~(1) can be used to solve the problem exactly and strengthened to take some bin and item symmetries into account. Typically, if bins $i$ and $j$ are such that $f_i \leq f_j$, $c_i \leq c_j$ and $C_i \geq C_j$ then bin $i$ is said to dominate bin $j$ and the constraints $y_j \leq y_i$ as well as $l_j \leq l_i$ can be added to rule out some dominated solutions. 
Suppose now that we have $k$ items that have the same size, any permutation of the items in a feasible solution remains feasible and have the same overall cost. A simple way to break this symmetry is to replace for each bin $j$, the k variables $x_{ij} \in \{0,1\}$  corresponding to the identical items by a variable $x^{'}_{j} \in \{0,\ldots,k\}$ giving the number of the k items assigned to $j$ (and adding $\sum_{j} x^{'}_{j} = k$). We also perform in practice another simple pre-processing by tightening the capacities of the bins using dynamic programming. Given the item sizes and for each bin $i$, one can compute efficiently (when $C_i$ is sufficiently small) the largest integer less than or equal to $C_i$ that is also equal to the sum of a subset of $S$.

\section{Two Extended Formulations of BPUC}
\label{af}

\paragraph{\textbf{The Cutting Stock Model}}
The  formulation of Gilmore and Gomory for the cutting stock problem~\cite{gilmore1961linear} can  be  adapted for BPUC. The items of equal size are now grouped and for $n^{'} \leq n$ different sizes we denote the number of items of sizes $w^{'}_1, \ldots, w^{'}_{n^{'}}$  by $q_1, \ldots, q_{n^{'}}$  respectively.
A cutting pattern for bin $j$ is a combination of item sizes that fits into bin $j$ using no more than $q_d$ items of size $w^{'}_{d}$. 
In the $i$-th pattern of bin $j$, the number of items of size $w^{'}_d$ that are in the pattern is denoted $g_{dij}$. 
Let $I_j$ be the set of all patterns for bin $j$. 
The cost of the $i$-th pattern (assumed to be non empty) of bin $j$ is therefore equal to $co_{ij} = f_{j} + (\sum^{n^{'}}_{d=1} g_{dij}w^{'}_{d})c_j$. The cutting stock formulation is using a variable $p_{ij}$ for the $i$-th pattern of bin $j$:
%
\begin{equation}
\label{bp-model}
     \begin{array}{ll}
     \textbf{Minimize} \qquad z_2 = \sum_{j=1}^{m}\sum_{i \in I_j} co_{ij}p_{ij} & \\          
	(2.1) \qquad  \sum^{m}_{j = 1}\sum_{i \in I_j} g_{dij}p_{ij} = q_d & \forall d \in \{1,\ldots,n^{'}\} \\
	(2.2) \qquad  \sum_{i \in I_j} p_{ij} = 1 & \forall j \in \{1,\ldots,m\} \\
	(2.3) \qquad p_{ij} \in \{0,1\} & \forall j \in \{1,\ldots,m\}, i \in I_j  \\
\end{array}
\end{equation}
Constraint (2.1) states that each item has to appear in a pattern (thus in a bin) and (2.2) enforces one pattern to be designed for each bin (convexity constraints). A pattern $p_{ij}$ for bin $j$ is valid if $\sum^{n^{'}}_{d = 1} g_{dij}w^{'}_{d} \leq C_j$ and all $g_{dij}$ are integers such that $q_d \geq g_{dij} \geq 0$. The sets $I_j$ have an exponential size so the linear relaxation of this model can be solved using column generation. The pricing step is a knapsack problem that can be solved efficiently by dynamic programming if the capacities are small enough. The pricing problem for bin $j$ can be written as follows (where $\pi_d$ is the dual variable associated to the d-th constraint (2.1) and $\lambda_j$ is the dual variable related to j-th constraint (2.2)):
\begin{equation*} 
     \begin{array}{ll}
     \textbf{Minimize} \qquad f_{j}y + (\sum^{n^{'}}_{d=1} g_{d}w^{'}_{d})c_j - \sum^{n^{'}}_{d = 1} \pi_{d}g_{d} - \lambda_j& \\          
	 \qquad  \sum^{n^{'}}_{d = 1} g_{d}w^{'}_{d} \leq C_jy &\\
	 \qquad g_{d} \in \{0, 1, \ldots, q_d\} & \forall d \in \{1,\ldots,n^{'}\} \\
	 \qquad y \in \{0, 1\} & 
	\end{array}
\end{equation*}
In practice, we solve the pricing problem using Dynamic Programming in $O(nC_j)$.
\paragraph{\textbf{The \arcflow\ Model}}
Carvalho introduced an elegant \arcflow\ model
 for BP~\cite{Carvalho99,DBLP:journals/eor/Carvalho02}. His model explicitly uses each unit of capacity  of the bins. In the following we show how to adapt it for BPUC.
Consider a multi-graph $G(V,A)$, where $V = \{0,1,...,C_{max}\} \cup \{F\}$ is the set  of $C_{max}+2$ nodes labelled from $0$ to $C_{max}$ and a final node labelled $F$, and  $A=I \cup J$ is the set of  two kinds of edges. An edge  $(a,b) \in I$ between two nodes labelled $a \leq C_{max}$ and $b \leq C_{max}$ represents the use of an item of size $b-a$. 
 An edge of $(a,F) \in J$ for each bin $j$ represents the usage $a$ of the bin $j$, and therefore $a \leq C_j$. An example of such a graph is shown in Figure~\ref{example2}(a).
%
%
\begin{figure}[t!]
\centering
\includegraphics[width=12cm]{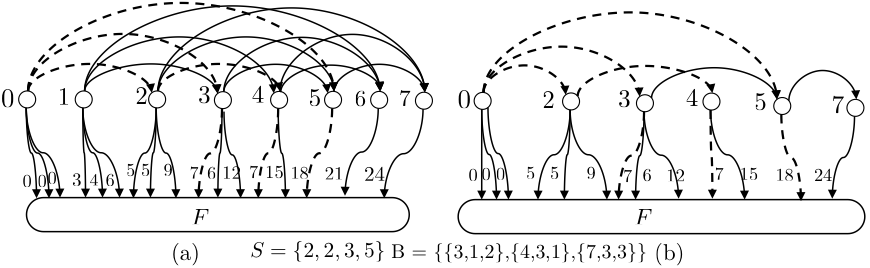}
\caption{(a) An example of the graph underlying the \arcflow\ model for $S= \{2,2,3,5\}$, $B=\{\{3,1,2\}, \{4,3,1\}, \{7,3,3\}\}$ so that $C_{max} = 7$. A packing is shown using a dotted line: $\{3\}$ is put in the first bin for a cost of 7, $\{2,2\}$ is in the second bin for a cost of 7 and $\{5\}$ in the last bin for a cost of 18.
(b) The graph underlying the \arcflow\ model after the elimination of symmetries.}
\label{example2}
\end{figure}
Notice that this formulation has symmetries since a packing can be encoded by many different paths.
Some reduction rules were given by Carvalho~\cite{Carvalho99}, which help in reducing such symmetries (see Figure~\ref{example2}(b)).



BPUC can be seen  as a minimum cost flow between 0 and $F$ with constraints enforcing the number of edges of a given length used by the flow to be equal to the number of items of the corresponding size. We have variables $x_{ab}$ for each edge $(a,b) \in I$ as well as variables $y_{aj}$ for each pair of bin $j\in \{1,\ldots,m\}$ and $a \in V$. 
The  cost of using an edge $(a,F) \in J$ for bin $j$ with $a > 0$ is $co_{aj} = f_{j} + a\cdot c_j$ and $co_{0j} = 0$. The model is as follows:
\begin{equation}
\begin{footnotesize}
     \begin{array}{l}
     \textbf{Minimize} \qquad z_3 = \sum^{m}_{j=1} \sum^{k = C_{max}}_{k = 0} co_{kj} y_{kj}  \\     
(3.1) \:\:\: \sum_{(a,b) \in A} x_{ab} - \sum_{(b,c) \in A} x_{bc} -  \sum^{m}_{j = 1} y_{bj} =  \left\{
    \begin{array}{ll}
        0 & \forall b \in \{1,2, \ldots, C_{max}\} \\
        -m & \mbox{for } b = 0\\
    \end{array}
\right.   \\
(3.2) \:\:\: \sum^{C_j}_{a=0} y_{aj} = 1 \qquad  \qquad\qquad\qquad\forall j \in \{1, \ldots, m\}  \\
(3.3) \:\:\: \sum_{(k,k+w^{'}_{d}) \in A } x_{k,k+w^{'}_{d}} = q_{d} \qquad \qquad \forall d \in \{1,2, \ldots, n^{'}\} \\
(3.4) \:\:\: y_{aj} = 0 \qquad \qquad\qquad \qquad  \qquad\: \forall (j,a)\in \{1, \ldots, m\} \times \{C_{j}+1,\ldots,C_{max}\} \\
(3.5) \:\:\: x_{ab} \in \mathbb{N}\qquad \qquad  \qquad\qquad\qquad \forall (a,b) \in A \\
(3.6) \:\:\: y_{aj} \in \{0,1\}  \qquad\qquad\qquad  \qquad\:\:\:\: \forall (j,a)\in \{1, \ldots, m\} \times \{0,\ldots,C_{max}\}
\end{array}
\end{footnotesize}
\end{equation}
Constraint (3.1) enforces the flow conservation at each node, and Constraint (3.2) states that each bin should be used exactly once. Constraint (3.3) ensures that all the items are packed, while Constraint (3.4) enforces that bin $j$ is not used beyond its capacity $C_j$. A solution can be obtained again by decomposing the flow into paths. The number of variables in this model is in $\mathcal{O}((n^{'}+ m) \cdot C_{max})$ and the number of constraints is $\mathcal{O}(C_{max} + m + n^{'})$. Although its
LP  relaxation is stronger than that of Model~(1), it remains dominated by that of Model~(2).
\begin{proposition}$z^*_3 \leq z^*_2$. The optimal value of the linear relaxation of (3) is less than  the optimal value of the linear relaxation of (2). 
\end{proposition}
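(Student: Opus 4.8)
The plan is to prove the inequality by a direct, objective-preserving mapping: I will show that every feasible solution of the LP relaxation of the cutting stock model~(2) induces a feasible solution of the LP relaxation of the \arcflow\ model~(3) of exactly the same cost. Since $z_3^*$ is the minimum over all feasible \arcflow\ LP solutions, applying this to an optimal solution of the relaxation of~(2) immediately gives $z_3^* \le z_2^*$.

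First I would fix an optimal solution $(p_{ij})$ of the LP relaxation of~(2), of value $z_2^*$. For each non-empty pattern $i$ of bin $j$ I would pick an arbitrary ordering of the $\sum_d g_{dij}$ items it contains; reading off their sizes in that order traces a directed path $0 = v_0 \to v_1 \to \cdots \to v_s$ in the \arcflow\ graph, where each step $v_{t-1} \to v_t$ is an item-edge of $I$ whose length is the corresponding item size and $v_s = a_{ij} := \sum_d g_{dij} w'_d$ is the load of the pattern. Because the pattern is valid we have $a_{ij} \le C_j$, so appending the bin-edge $(a_{ij},F) \in J$ of bin $j$ completes a path from $0$ to $F$. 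I would then build the \arcflow\ solution by routing $p_{ij}$ units of flow along this path for every pattern and summing the contributions: the bin-edge variable $y_{a_{ij},j}$ is increased by $p_{ij}$, and each traversed item-edge variable $x_{v_{t-1}v_t}$ is increased by $p_{ij}$; an empty pattern (of cost $co_{0j}=0$) simply contributes $p_{ij}$ to $y_{0j}$.

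Next I would check the constraints of~(3) one by one. The constructed $x,y$ form a non-negative combination of unit flows on paths from $0$ to $F$, so the conservation equations~(3.1) hold at every internal node automatically, and the total flow leaving node $0$ is $\sum_{i,j} p_{ij} = \sum_j 1 = m$ by the convexity constraints~(2.2), matching the right-hand side. Constraint~(3.2) follows from $\sum_{a \le C_j} y_{aj} = \sum_{i \in I_j} p_{ij} = 1$, and~(3.4) holds since the only bin-edges carrying flow have $a = a_{ij} \le C_j$. For~(3.3), pattern $i$ of bin $j$ contributes exactly $g_{dij}$ item-edges of length $w'_d$, each carrying $p_{ij}$, so the total flow on length-$w'_d$ edges is $\sum_{i,j} g_{dij} p_{ij} = q_d$ by~(2.1). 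Finally, since $co_{a_{ij},j} = f_j + a_{ij} c_j = co_{ij}$, the cost of the constructed flow is $\sum_{i,j} co_{ij} p_{ij} = z_2^*$.

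The only genuinely delicate point is the conversion of a pattern into a path: a pattern records only item multiplicities, not an order, and the item-edges of model~(3) are shared across all bins rather than being bin-specific. I would resolve this by observing that any ordering yields a valid path (the intermediate nodes are irrelevant to feasibility) and that the constraints of~(3) restrict only the aggregate flow on each edge, which the multiplicities $g_{dij}$ fix independently of the chosen orderings; hence the construction is well defined and the objectives coincide exactly. Feasibility together with cost equality then gives $z_3^* \le z_2^*$, as claimed.
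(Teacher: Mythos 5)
Your proposal is correct and follows essentially the same route as the paper's proof: map each (possibly fractional) pattern $p_{ij}$ of model~(2) to a path of the \arcflow\ graph carrying $p_{ij}$ units of flow, check that (3.1)--(3.4) follow from (2.1)--(2.2), and observe that the costs coincide. You merely spell out the path construction and the handling of empty patterns more explicitly than the paper does.
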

 \begin{proof} Let $(p^*)$ be a solution of the linear relaxation of (2). Each pattern $p^*_{ij}$ is mapped to a path of the \arcflow\ model. A fractional value $p^*_{ij}$ is added on the arcs corresponding to the item sizes of the pattern (the value of the empty patterns for which all $g_{dij} = 0$ is put on the arcs $y_{0j}$). 
The flow conservation (3.1) is satisfied by construction, so is (3.2) because of (2.2) and so are the demand constraints (3.3) because of (2.1). Any solution of (2) is thus encoded as a solution of (3) for the same cost so $z^*_3 \leq z^*_2$. \qed
 \end{proof}
%

\begin{proposition} $z^*_2$ can be stronger than $z^*_3$ \emph{i.e.} there exist instances such that $z^*_2 > z^*_3$.  
\end{proposition}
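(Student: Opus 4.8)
The plan is to prove the existential claim by a single explicit counterexample: since the previous proposition already gives $z_3^* \le z_2^*$ in general, it suffices to construct one instance together with a feasible fractional arc-flow solution whose cost lies strictly below the optimum of the cutting-stock relaxation. I would look for the gap in the one structural difference between the two relaxations. A cutting-stock pattern $i$ for bin $j$ must be \emph{valid}, i.e. $q_d \ge g_{dij} \ge 0$, so it may never pack more copies of a size than are available. In the arc-flow model, by contrast, the item edges $x_{ab}$ are shared by all bins and are constrained only in aggregate through the demand equalities (3.3); a fractional flow may therefore traverse a whole capacity interval of a bin using more item edges of a given length than there are items of that size, the surplus being balanced globally. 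Such a flow realises a \emph{forbidden} pattern ($g_{dij} > q_d$), and exploiting it lets the arc-flow LP spread the fixed (and usage) cost of a bin over a larger fractional load, which is exactly the source of the strict inequality.

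Concretely, I would take the single-bin instance $B=\{\{2,1,0\}\}$ with one item $S=\{1\}$, so $C_1=2$, $f_1=1$, $c_1=0$ and $W=1<C_1$. For the cutting-stock model the demand $q_1=1$ must be met with total pattern weight one, and every non-empty pattern carries the full fixed cost; forcing the weight on the empty pattern to zero then gives $z_2^* = f_1 + c_1 W = 1$. For the arc-flow model I would place weight $\tfrac12$ on the path $0 \to 1 \to 2$ that closes bin $1$ at node $2$ (cost $co_{2,1}=f_1+2c_1=1$) and weight $\tfrac12$ on the empty closing edge $y_{0,1}$. Each unit edge $(0,1)$ and $(1,2)$ then carries $\tfrac12$, so the demand constraint $x_{0,1}+x_{1,2}=1$ holds, flow conservation and the convexity constraint (3.2) are satisfied, and the total cost is $\tfrac12 \cdot 1 = \tfrac12$. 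Hence $z_3^* \le \tfrac12 < 1 = z_2^*$.

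Verifying feasibility of the arc-flow solution is routine, so the real work is the lower bound $z_2^* \ge 1$: one must argue that no convex combination of valid patterns can beat it, i.e. that meeting the demand forces full weight on non-empty patterns and therefore the full fixed cost. This is the step I expect to be the main obstacle, and it is where the restriction $q_d \ge g_{dij}$ does its work. The construction generalises cleanly — a single bin with $C_1 > W$ and $f_1 > 0$ yields a gap of $f_1\bigl(1 - W/C_1\bigr) > 0$ for \emph{any} usage cost $c_1$, and a second dominated bin can be appended if an instance with $m \ge 2$ is preferred — but one explicit instance already settles the existential statement.
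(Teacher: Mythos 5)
Your proposal is correct and follows essentially the same route as the paper: both exhibit an explicit instance and a feasible fractional arc-flow solution whose path uses more arcs of a given size than there are items of that size (i.e.\ realises a pattern with $g_{dij}>q_d$ that is forbidden in the cutting-stock model), which is exactly the structural gap the paper identifies. Your witness ($S=\{1\}$, $B=\{\{2,1,0\}\}$, giving $z_3^*\le \tfrac12 < 1 = z_2^*$) is smaller than the paper's ($S=\{1,1,2\}$, $B=\{\{3,1,1\},\{3,4,4\}\}$), and your arc-flow solution and the trivial lower bound $z_2^*=1$ both check out.
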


 \begin{proof} Consider the following instance: $S=\{1,1,2\}$ and $B=\{\{3,1,1\}, \{3,4,4\}\}$. Two items of size 1 occurs so that $n^{'} = 2$, $q_1 = 2, q_2 = 1$ corresponding to $w^{'}_1=1, w^{'}_2=2$. The two  bins have to be used and the first \emph{dominates} the second (the maximum possible space is used in bin 1 in any optimal solution) so the optimal solution is the packing $\{\{2,1\},\{1\}\}$ (cost of 12). Let's compute the value of $z^{*}_2$. It must fill the first bin  with the pattern $[g_{111},g_{211}] = [1,1]$ for a cost of $4$. Only three possible patterns can be used to fill the second bin: $[0,0]$, $[1,0]$ and $[2,0]$ (a valid pattern $p_{i2}$ is such that $g_{1i2} \leq 2$). The best solution is using $[g_{112},g_{212}]=[2,0]$ and $[g_{122},g_{222}]=[0,0]$ taking both a 0.5 value to get a total cost $z^{*}_2 = 4 + 6 = 10$. The \arcflow\ model uses a path to encode the same first pattern [1,1] for bin 1. But it can build a path for bin 2 with a $\frac{1}{3}$ unit of flow taking three consecutive arcs of size 1 to reach a better cost of $\frac{1}{3}*16 \approx 5,33$. This path would be a pattern [3,0] which is not valid for (\ref{bp-model}). So $z^*_3 \approx 9.33$ and $z^*_2 > z^*_3$. 
 \qed
 \end{proof}
The \arcflow\ model  may use a path containing more than $q_d$ arcs of size $w^{'}_d$ with a positive flow whereas no such patterns exist in (3) \textit{because} the sub-problem is subject to the constraint $ 0 \leq g_{dij} \leq q_{d}$. The cutting stock formulation used in \cite{Carvalho99}  ignores this constraint and therefore the bounds are claimed to be equivalent.

\section{Extending the Bin Packing Global Constraint}

A bin packing global constraint was introduced for constraint programming  by \cite{DBLP:conf/cp/Shaw04} and discussions about its filtering can be found in \cite{Schaus2009,DBLP:conf/aaai/ReginR11}. We present an  extension of this global constraint to handle BPUC. The scope and parameters are as follows:
$$\textsc{BinPackingUsageCost}([x_{1},\ldots,x_{n}], [l_{1},\ldots,l_{m}], [y_{1}, \ldots, y_{m}], b, z, S, B)$$
Variables $x_i \in \{1,\ldots,m\}$, $l_j \in [0,\ldots,C_{j}]$ and $b \in \{1,\ldots,m\}$ denote  the bin assigned to item $i$, the load of bin $j$, and the number of bins used, respectively. These  variables are also used by the \textsc{BinPacking} constraint. Variables $y_i \in \{0,1\}$ and $z \in \mathds{R}$ are due to the cost. They denote whether bin $j$ is open, and the cost of the packing. The last two arguments refers to BPUC and give the size of the items as well as the costs (fixed and unit). In the following, $\underline{x}$  (resp. $\overline{x}$) denotes the lower (resp. upper) bound of variable $x$.   \\

\noindent\textbf{Cost-based Propagation using $Lb_1$.}
The characteristics of the bins of the restricted  BPUC  problem  based on the current state of the domains of the variables is  denoted by $B^{'}$, and defined by $B^{'} = \{\{C^{'}_1, f^{'}_1, c_1\},\ldots, \{C^{'}_m, f^{'}_m, c_m\}\}$ where $C^{'}_j = \overline{l_j} - \underline{l_j}$ is the remaining capacity, and $f^{'}_j$ is the remaining fixed cost $f^{'}_j= (1-\underline{y_j})f_j$ that is set to 0 if the bin is known to be open. The total load that remains to be allocated to the bins is denoted $W^{'} = W - \sum^{m}_{j=1}\underline{l_j}$. Notice that we use the lower bounds of the loads rather than the already packed items. We assume it is strictly better due to the reasoning of the bin packing constraint.

\paragraph{Lower bound of z} 
The first propagation rule is the update of the lower bound   of $z$ denoted by $\underline{z}$. The bound is computed by summing the cost due to open bins and minimum loads with the value of $Lb_1$ on the remaining problem. 
It gives a maximum possible increase in cost denoted by $gap$:
\begin{equation}
\label{r1}
\begin{array}{ccc}
Lb^{'}_1 = \sum^{m}_{j=1}(\underline{l_j}c_j + \underline{y_j}f_j) + Lb_1(W^{'},B^{'}); & \hspace{6mm} \underline{z} \leftarrow max(\:\underline{z}, Lb^{'}_1); & \hspace{6mm} gap = \overline{z} - Lb^{'}_1
\end{array}
\end{equation}

\noindent\emph{Bounds of the load variables.} 
We use the notation $L_j$ to denote the units of space used by $Lb_1$  on bin $a_j$. The bins $a_1, \ldots, a_{k-1}$ are fully used so $\forall j < k, L_j = C^{'}_{a_j}$, for bin $a_k$ we have $L_k = W^{'} - \sum^{k-1}_{j = 1} C^{'}_{a_j}$ and $\forall j > k, L_j = 0$.
We define the bin packing problem $B^{''}$ obtained by excluding the space supporting the lower bound $Lb_1(W^{'},B^{'})$. 
 The resulting bins are defined as $B^{''} = \{\{C^{''}_1, f^{'}_1, c_1\},\ldots, \{C^{''}_m, f^{'}_m, c_m\}\}$ where $C^{''}_{a_j} = 0$ for all $j < k$, $C^{''}_{a_k} = C^{'}_{a_k} - L_k$ and $C^{''}_{a_j} = C^{'}_{a_j}$ for all $j > k$. Lower and upper bounds  of loads are adjusted with rules ($\ref{r2}$). 


Let $q_{a_j}^-$ be the largest quantity that can be removed from a bin $a_j$, with $j \leq k$,
and put at the other cheapest possible place without overloading $\overline{z}$. Consequently, when $j < k$, $q_{a_j}^-$ is the largest value in $[0,L_j]$ such that $(Lb_1(q_{a_j}^-,B^{''}) - q_{a_j}^-r_{a_j}) \leq gap$. When $j = k$, the same reasoning can be done by setting $C^{''}_{a_k} = 0$ in $B^{''}$. 

Similarly, let $q_{a_j}^+$ be the largest value in $[0,C^{'}_{a_j}]$ that can  be put on a bin $a_j$, with $j \geq k$, without triggering a contradiction with the remaining gap of cost. $q_{a_j}^+$ is thus the largest value in $[0,C^{'}_{a_j}]$ such that $(q_{a_j}^+r_{a_j} - (Lb_1(W^{'},B^{'}) - Lb_1(W^{'}-q_{a_j}^+,B^{'}))) \leq gap$.
\begin{equation}
\label{r2}
\begin{array}{cc}
\forall j \leq k, \:\:\: \underline{l_{a_j}} \leftarrow \underline{l_{a_j}} + L_j -q_{a_j}^-; & \hspace{35mm} \forall j \geq k, \:\:\: \overline{l_{a_j}} \leftarrow \underline{l_{a_j}} + q_{a_j}^+. \\
\end{array}
\end{equation}
\noindent\emph{Channelling.} The constraint ensures two simple rules relating the load and open-close variables (a bin of zero load can be open):
$y_j = 0 \implies l_j = 0 \: \textrm{ and } \: l_j>0 \implies y_j = 1.$

\smallskip\noindent\emph{Bounds of the open-close variables.} The propagation rule  for $\underline{l_j}$ can derive $\underline{l_j} > 0$ from (\ref{r2}), which in turn (because of the channelling between $y$ and $l$) will force a bin to open \emph{i.e.}  $y_{a_j} \in \{0,1\}$  will change  to $y_{a_j}=1$. To derive that a $y_j$ has to be fixed to $0$, we can use $Lb_1$ similarly to the reasoning presented for the load variables (checking that the increase of cost for opening a bin remains within the gap). 

Tightening the bounds of the load variables can trigger the existing filtering rules of the bin packing global constraint thus forbidding or committing items to bins. Notice that items are only increasing the cost indirectly by increasing the loads of the bins because the cost model is defined by the state of the bins (rather than the items). The cost-based propagation on $x$ is thus performed by the bin packing global constraint solely as a consequence of the updates on the bin related variables, i.e.\ $l$ and $y$.\\

\smallskip\noindent\emph{Filtering the load variables with dynamic programming.} Another filtering rule based on dynamic programming can be added to the Bin-Packing global constraint. We can simply update $\underline{l_j}$ (resp. $\overline{l_j}$) to the smallest (resp. largest) integer greater than (resp. less than) or equal to the current value of $\underline{l_j}$ (resp. $\overline{l_j}$) that can be reached with the remaining items that can still go on bin $j$. We can solve this problem by dynamic programming (taking into account the items already assigned to $j$). This is generally  very costly in practice and is not performed by default by the bin-packing global constraint. Nevertheless, in this case, one can notice that $Lb_1$ strongly relies on the accuracy of  $\underline{l_j}$  and $\overline{l_j}$ and we observed that it can help significantly in practice for some hard instances. This technique has been originally proposed by \cite{DBLP:journals/anor/Trick03} for knapsack constraints. \\

\begin{example}\label{ex2}Let's consider the following instance:  B =\{(9,9,5), (3,1,5), (7,14,3), (5,1,10), (12,12,10)\} and S = \{3,5,5,5\}. 
The total load is therefore $W = 18$ and the values or $r_j$ sorted increasingly are $r_3 = 14/7 + 3 = 5 < r_2 = 5.33 < r_1 = 6 < r_4 = 10.2 < r_5 = 11$. The lower bound is therefore $Lb_1 = 7\times5 + 3\times5.33 + 8\times6 = 99$. Figure~\ref{example2} is  visualising  $Lb_1$ and shows the optimal solution. Propagation on the bounds of the load variables by dynamic programming is not performed here for the illustrative purposes.

\begin{figure}[t!]
\centering
\includegraphics[width=11cm]{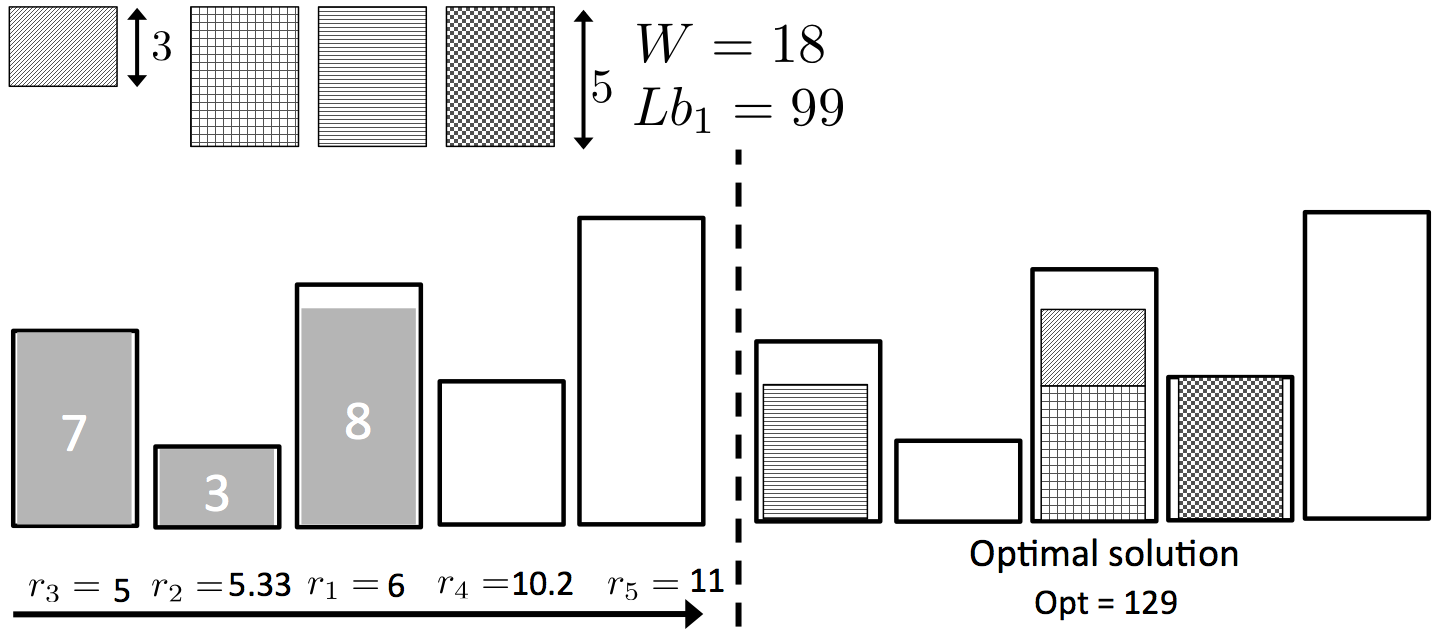}
\caption{Visualisation of the lower bound Lb$_1$ and optimal solution for the instance of example \ref{ex2}. The bins have been sorted by increasing $r$.} 
\label{example2}
\end{figure}

Assuming that we have an upper bound $\overline{z}$ of value 130, the gap is therefore initially equal to 130 - 99 = 31. 
Initially $Lb_1' = Lb_1$ as $W' = W$ and $B'=B$. 
Afterwards, the propagation is able to deduce $\underline{l_3} = \underline{l_1} = 1$ as well as $\overline{l_5} = 6$ as illustrated on the left of Figure~\ref{example3}. Indeed $l_3 = 0$ would lead to a lower bound of $3\times5.33 + 9\times6 + 5\times10.2 + 1*11 = 132$ thus overloading $\overline{z}$. Similarly $l_1 = 0$ would give $Lb_1=135 > 130$ and with $l_5 = 7$ we get $Lb_1= 134 > 130$. At this stage we know that bins 3 and 1 are open in any solution of cost less than 130. This is affecting the propagation as the fixed cost of bins 1 and 3 are now  included in $Lb^{'}_1$ and propagation is strengthened.
The  cost due to open bins is $9+14=23$ and due to  lower bounds of loads of bins 1 and 3 is $5+3=8$.
Consequently, $B'= \{(8,0,5), (3,1,5), (6,0,3), (5,1,10), (6,12,10)\}$ and $W'=16$.
Therefore, 
the $r$ values are now ranked: 
$r_3 = 3  <  r_1 = 5 < r_2 = 5.33 < r_4 = 10.2 < r_5 = 12$. Notice that the ordering on bins have changed.
The new value of $Lb_1$ is $6\times3 + 8\times5 + 2\times5.33 = 68.66$ and thus 
the new value of  $Lb'_1$ is $99.66$. 
When the fix point is reached we know that $\underline{l_3} = \underline{l_1} = 3$ as well as $\overline{l_5} = 3$.
 
\begin{figure}[t!]
\centering
\includegraphics[width=12cm]{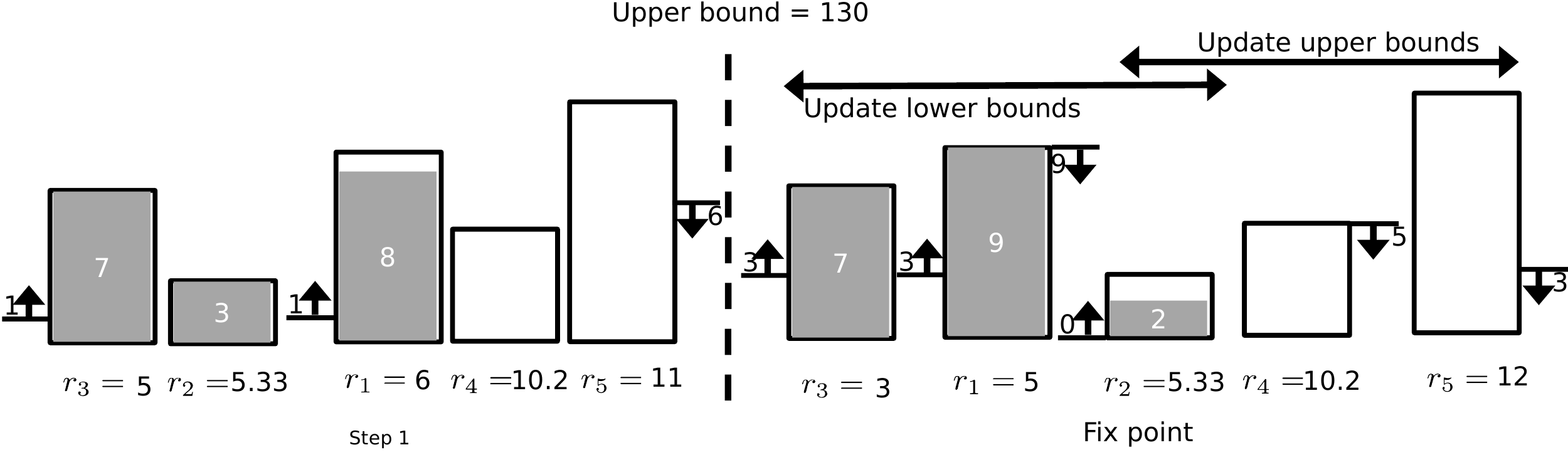} 
\caption{Propagation of lower and upper bounds of the load variables. 
The state of the domains at the end of the first iteration is depicted on the left.
   The lower bounds of $l_1$ and $l_3$ are increased to 1 and 
 the upper bound of $l_5$ is decreased to $6$.
  The state of the domains when the fix point is reached is shown on the right.}
\label{example3}
\end{figure}

Let's now add the propagation using dynamic programming ($\overline{l_3}$ would typically be reduce to 5). The lower bound observed at the root node is 119.66 whereas the one obtained by linear programming \emph{i.e.} with formulation~(\ref{lp-model}) (including the initial tightening of the capacities by dynamic programming) is 114.4.
\end{example}

\noindent\textbf{Algorithms and Complexity}. Assuming that $B^{'}$ and $W^{'}$ are available, $Lb_1(W^{'},B^{'})$ can be computed in $O(m\,log(m))$ time. Firstly we compute the $r_{j}$ values corresponding to  $B^{'}$ for all bins. Secondly, we sort the bins in non-decreasing $r_j$. Finally, the bound is computed by iterating over the sorted bins and the complexity is dominated by the sorting step. After computing $Lb_1(W^{'},B^{'})$, the values $a_j$ (the permutation of the bins) such that $r_{a_1} \leq r_{a_2} \leq \ldots \leq r_{a_m}$ are available as well as the critical $k$ and $L_k = W^{'} - \sum^{k-1}_{j = 1} C^{'}$. The update of $\underline{l_{a_j}}$ and  $\overline{l_{a_j}}$ can then be done in $O(m)$ as shown in Figure~\ref{propalgo}. Notice that this process can be repeated until a fix point is reached.

Algorithm \texttt{UpdateMinimumLoad} of Figure~\ref{propalgo} is used to update  the lower bounds of the load variables 
for each bin $a_j$  such that $r_{a_j} \leq r_{a_k}$. Recall that $k$ is the number of bins supporting the lower bound $Lb_1$.
If $j < k$ (resp.\ $j=k$) then the algorithm tries to find  the largest quantity, denoted by $q^-_{a_j}$, that can be removed
from the bin $a_j$ and can be put on the bins $a_b$, where $b \geq k$ (resp.\ $b \geq k+1$),
such that the increment in the cost remains smaller than the $gap$ (Lines~4--10). 
The algorithm loops over the bins as long as the  $q^-_{a_j}$ has not reached its maximum amount on bin $a_j$ \emph{i.e.} $L_{j}$ and the gap has not been overloaded.
When the increment in cost exceeds the gap then the minimum load if known.
Similarly, \texttt{UpdateMinimumLoad} of Figure~\ref{propalgo} is used to update  the upper bounds of the load variables 
for each bin $a_j$  such that $r_{a_j} \geq r_{a_k}$.

\begin{figure}[t]
\begin{small}
\centering

  \begin{minipage}{.45\textwidth}
  \centering

\begin{coden}
{ Algorithm 1: UpdateMinimumLoad} \\
  {\bf Input}: $a_{j}$ with $j \leq k$, $B^{'}$, $gap$ \\
  {\bf Output}: a lower bound of $l_{a_j}$ \\
  ~1.  $costInc \gets 0$; $q^-_{a_j} \gets 0$; $b \gets k$; \\
  ~2.  {\bf If} ($j = k$) \{$b \gets k+1$;\} \\
  ~3.  {\bf While} ($q^-_{a_j} < L_{j}$ \&\& $b \leq m$) \\
  ~4.   \> \> $loadAdd    \gets \min(L_{j}-q^-_{a_j}, C^{'}_{a_{b}} - L_b)$;\\
  ~5.   \> \> $costIncb     \gets loadAdd \times (r_{a_b}-r_{a_j})$;   \\ 
  ~6.   \> \> {\bf If} ($(costIncb + costInc) > gap$) \\ 
  ~7.   \> \> \> \> $q^-_{a_j} \gets q^-_{a_j} + \lfloor \frac{gap - costInc}{r_{a_b}-r_{a_j}} \rfloor$;\\
  ~8.   \> \> \> \> {\bf return} $\underline{l_{a_j}} + L_{j} - q^-_{a_j} $;\\
  ~9.   \> \> $costInc \gets \:\:costInc+costIncb$; \\ 
  ~10. \> \> $q^-_{a_j} \gets  \:\:q^-_{a_j}+loadAdd$;  $b = b+1$; \\
  ~11. {\bf return} $\underline{l_{a_j}}$ \\
  \end{coden}

\end{minipage}
\end{small}
  \begin{minipage}{0.45\textwidth}  
  \centering
  \vspace{-0.3cm}

\begin{small}
\begin{coden}
{Algorithm 2: UpdateMaximumLoad}\\
   {\bf Input}: $a_{j}$ with $j \geq k$ , $B^{'}$, $gap$\\
   {\bf Output}: an upper bound of $l_{a_j}$ \\
  ~1.  $costInc \gets 0$; $q^+_{a_j} \gets 0$; $b \gets k$; \\
  ~2.  {\bf If} ($j = k$) \{$q^+_{a_j} \gets L_k$; $b \gets k-1$;\} \\
  ~3.  {\bf While} ($q^+_{a_j} < C^{'}_{a_j}$ \&\& $b \geq 0$) \\
  ~4.   \> \> $loadAdd    \gets min(L_b, C^{'}_{a_j}-q^+_{a_j})$;   \\ 
  ~5.   \> \> $costIncb     \gets loadAdd \times (r_{a_j}-r_{a_b})$;   \\ 
  ~6.   \> \> {\bf If} $((costIncb + costInc) > gap)$ \\ 
  ~7.   \> \> \> \> $q^+_{a_j} \:\: \gets q^+_{a_j}+\lfloor \frac{gap - costInc}{r_{a_j}-r_{a_b}} \rfloor$;\\
  ~8.   \> \> \> \> {\bf return} $\underline{l_{a_j}} + q^+_{a_j}$;\\
  ~9.   \> \> $costInc\:\: \gets \:\:costInc+costIncb$; \\
  ~10. \> \> $q^+_{a_j}\:\: \gets \:\: q^+_{a_j}+loadAdd$; $b\: = b-1$; \\
  ~11. {\bf return} $\overline{l_{a_j}}$
  \end{coden}
\end{small}
\end{minipage}
\caption{Propagation algorithms for updating the lower and upper bounds of the load variables \label{propalgo}}
\end{figure}

\subsection{Dominance and symmetries.} The dominance and symmetries previously mentioned can be similarly eliminated here. We go a step further and eliminate some dominances during search. In particular, for two bins $i$, $j$ that are known to be open ($y_i = y_j = 1$)  such that $c_i \leq c_j$ and $C_i \geq C_j$, we can enforce $l_i \geq l_j$ in the remaining sub-tree. Regarding item symmetries, we prefer to avoid changing the domain's definitions (as done for Formulation~(\ref{lp-model})) to keep the semantics of the global constraint unchanged. Instead, each time a bin is proved to be infeasible for an item, the same value is removed from all the ungrounded items of the same size.
 
\subsection{Search.} The propagation scheme relies on the knowledge of good upper bounds so the search process is driven by this need. We branch first on the bin variables $y$ before the item's variables $x$. We branch on the cheapest bins first by selecting the $y$ with minimum $r$ value and setting it to 1. The $y$ variables are therefore sorted lexicographically by non-decreasing $r_j$ values. Once all $y$ variables are grounded, we select the open bin $k$ with the smallest slope ($c_k$) and assign to it the largest item that can fit in a perfect packing of the bin. We thus check by dynamic programming that the item selected belongs to a subset of items that perfectly (up to its capacity) fill the bin. The search is binary, \emph{i.e.} on the left branch we enforce $x_i = k$ and on the right branch we simply have $x_i \neq k$.

\subsection{Propagating a stronger lower bound.} The lower bound $z^{*}_2$ obtained by solving the linear relaxation of the cutting stock model can also be propagated by the constraint. However current restrictions of the domains have to be taken into account when computing the bound during search. The master and pricing problem are affected as we apply the bound by taking into account items already assigned (the capacity of the bins is reduced accordingly), and bins known to be open/close and currently possible items for each bin. The bound thus benefits from all other reasoning that acts on the domains. Since the pricing problem is time-consuming we found that the following techniques are important for efficiency:
\begin{itemize}
\item We keep the columns of the previous call to the cutting stock model that are still compatible (feasible) with respect to the current domains.
\item When solving the pricing problem for a bin, an upper bound is first computed by sorting the items from the most beneficial to the least and adding them greedily as long as the capacity is not overloaded. If the column obtained has a negative reduced cost, the dynamic programming algorithm is not called and the algorithm moves on. This is a very common technique in column generation that pays off when the pricing is costly and the master very easy to solve (because it usually increases the number of iterations). The dynamic programming algorithm is called at least once to prove that no negative reduced cost columns still exist and ensure the validity of the bound.
\end{itemize}

\section{Experimental Results}
\label{sec-experiments}
In this section we  report the results for solving  bin packing with usage cost problem to show 
the efficiency of the BPUC global constraint.

We  compare on randomly generated instances the lower bounds $z^*_1, z^*_2, z^*_3$ as well as exact algorithms: Model (1), \arcflow\  Model~(3), and two CP models using the \textsc{BinPackingUsageCost} constraint. The second CP model referred to as CP+CG activates the propagation of  the $z_2^*$ bound by the \textsc{BinPackingUsageCost} constraint.  Standard symmetry/dominance breaking techniques for BP are applied to the MIP~\cite{orbital2012} of Model (1) and CP~\cite{DBLP:conf/cp/Shaw04}. A random instance  is defined by  $(n,m,X)$, where  
 $n$ is the number of items ($n \in \{15, 25, 200, 250, 500\}$),  $m$ is the number of bins ($m \in  \{10, 15, 25, 30\}$), and  parameter $X \in \{1,2,3\}$ denotes that the item sizes are uniformly randomly generated in the intervals $[1,100]$, $[20,100]$, and $[50,100]$ respectively. 
The  capacities of the bins are picked randomly from the sets $\{80,$ $100,$ $120,$ $150,$ $200,$ $250\}$ and
$\{800, 1000, 1200, 1500,$ $2000,$ $2500\}$ 
when $n \in \{15, 25\}$ and  $n \in \{200, 250, 500\}$ respectively. The fixed cost of each bin is set to its capacity and the unit cost is randomly picked from the interval $[0,1]$.
For each combination of  $(n,m) \in \{(15,10),$ $(25,15),$ $(25,25),$ $(200,10),$ $(250,15),$ $(500,30) \}$ 
and $X \in \{1,2,3\}$  we generated $10$ instances 
giving 180 instances in total. 

\begin{table}[t]
\centering%
\caption{Comparison of the results obtained using  MIP. \arcflow. and CP approaches approaches on random bin packing with
usage cost problem instances with 600 seconds time-limit.\label{tab:cbp}}
\scalebox{0.67}{
\begin{tabular}{|r|r|r|r|r|r|r|r|r|r|r|r|r|r|r|r|c|r|c|c|c|}             
\hline
n & m & X &\multicolumn{4}{|c|}{MIP}    	& \multicolumn{3}{c|}{CP}  & \multicolumn{3}{c|}{CP+CG}  &  \multicolumn{4}{c|}{\arcflow\ }     & CG\\ \hline
n	&	m	&	X	& gap $z^{*}_1$	&	\#ns	&	cpu	&	nodes	&	\#ns	&	cpu	&	nodes	&	\#ns	&	cpu	&	nodes	& gap $z^{*}_3$	&	\#ns	&	cpu	&	nodes	&gap $z^{*}_2$	\\
\hline
15	&	10	&	1	&	4.81	&	10	&	1.2		&	2112		&	10	&	0.3		&	1270		&	10	&	0.3		&	\textbf{1109}		&	4.53	&	10	&	2.1	&	2402		&	\textbf{4.47}	\\ 
15	&	10	&	2	&	2.91	&	10	&	1.1		&	2050		&	10	&	0.1		&	1020		&	10	&	0.1		&	\textbf{785}		&	1.81	&	10	&	0.7	&	1508		&	\textbf{1.77}	\\ 
15	&	10	&	3	&	3.31	&	10	&	0.8		&	1847		&	10	&	0.4		&	4372		&	10	& 	\textbf{0.2}		&	972		&	1.36	&	10	&	0.6	&	\textbf{420}		&	\textbf{1.33}	\\ \hline
25	&	15	&	1	&	1.76	&	10	&	35.1		&	29546	&	10	&	20.1		&	140533	&	10	&	\textbf{1.2}		&	1410		&	1.60	&	9	&	47.5	&	\textbf{891}		&	\textbf{1.59}	\\ 
25	&	15	&	2	&	1.91	&	10	&	74.2		&	114068	&	10	&	10.1		&	78297	&	10	&	\textbf{0.8}		&	\textbf{1498}		&	1.53	&	10	&	61.2	&	12557	&	\textbf{1.51}	\\
25	&	15	&	3	&	2.60	&	10	&	22.6		&	46350	&	8	&	26.9		&	198792	&	10	&	\textbf{7.3}		&	3359		&	0.93	&	10	&	11.3	&	\textbf{1039}		&	\textbf{0.93}	\\ \hline
25	&	25	&	1	&	1.67	&	9	&	19.0		&	10876	&	8	&	14.5		&	99250	&	\textbf{10}	&	\textbf{1.3}		&	\textbf{1394}		&	1.42	&	9	&	31.2	&	4582		&	\textbf{1.42}	\\
25	&	25	&	2	&	1.47	&	9	&	97.4		&	114530	&	9	&	17.6		&	117327	&	10	&	\textbf{2.1}		&	\textbf{1706}		&	1.08	&	10	&	54.8	&	5967		&	\textbf{1.07}	\\
25	&	25	&	3	&	2.30	&	10	&	33.7		&	53462	&	8	&	109.5	&	537144	&	10	&	\textbf{2.2}		&	1816		&	0.93	&	10	&	20.6	&	\textbf{510}		&	\textbf{0.92}	\\\hline
200	&	10	&	1	&	2.73	&	10	&	\textbf{1.9}		&	\textbf{1011}		&	10	&	6.4		&	14626	&	10	&	13.5		&	14626	&	2.73	&	0	&	-	&	-		&	2.73	\\
200	&	10	&	2	&	2.16	&	10	&	18.9		&	68121	&	10	&	\textbf{2.9}		&	6641		&	10	&	10.3		&	6641		&	2.16	&	0	&	-	&	-		&	2.16	\\
200	&	10	&	3	&	1.82	&	10	&	16.2		&	93306	&	10	&	\textbf{0.4}		&	4064		&	10	&	1.7		&	4064		&	1.82	&	0	&	-	&	-		&	1.82	\\\hline
250	&	15	&	1	&	1.53	&	10	&	\textbf{3.9}		&	\textbf{1627}		&	10	&	7.4		&	11159	&	10	&	11.7		&	11159	&	1.53	&	0	&	-	&	-		&	1.53	\\
250	&	15	&	2	&	1.30	&	9	&	26.3		&	50275	&	10	&	\textbf{6.1}		&	12902	&	10	&	20.4		&	12902	&	7.21	&	0	&	-	&	-		&	1.30	\\
250	&	15	&	3	&	0.96	&	4	&	190.9	&	546599	&	10	&	\textbf{4.8}		&	14111	&	10	&	23.7		&	14111	&	0.96	&	0	&	-	&	-		&	0.96	\\\hline
500	&	30	&	1	&	0.62	&	10	&	\textbf{18.1}		&	7449		&	10	&	30.6		&	14829	&	10	&	77.9		&	14829	&	0.62	&	0	&	-	&	-		&	0.62	\\
500	&	30	&	2	&	0.46	&	7	&	173.2	&	193274	&	10	&	\textbf{16.7}		&	14340	&	10	&	78.4		&	14340	&	0.46	&	0	&	-	&	-		&	0.46	\\
500	&	30	&	3	&	0.28	&	1	&	184.6	&	322806	&	\textbf{10}	&	\textbf{24.2}		&	21285	&	9	&	187.9	&	19678	&	0.28	&	0	&	-	&	-		&	0.28	\\\hline
\end{tabular}
}
\end{table}

The time-limit  was 600 seconds. All the experiments were carried out  on a Dual Quad Core Xeon CPU, running
Linux 2.6.25 x64, with  11.76 GB of RAM, and 2.66 GHz processor speed. The LP solver used was CPLEX 12.5 (default parameters) and the CP solver was Choco 2.1.5. Table~\ref{tab:cbp} reports the average cpu time in seconds (denoted \texttt{cpu}), the average number of nodes in the search tree (denoted \texttt{nodes}) and the average gap of the lower-bounds found at root node (denoted $gap\:z^*_x$). This gap is computed as a percentage of the best known solution found. Column \texttt{\#ns} gives the number of instances solved to optimality (\emph{i.e} optimal value was found and proved optimal) within the time limit.


The CP approach shows better performance when scaling to larger size instances (and capacities) than the MIP and \arcflow\ models. Overall, the  \arcflow\ model fails to solve optimally 92 instances of the 180 instances, MIP fails on 21, CP fails on 7 and CP+CG only on 1.

On one side, the search space is dramatically reduced by the propagation of $z_2^*$ (CP+CG) for small problems. On the other side, the bound becomes ineffective on large problems (this can also be seen on the gap at the root node). \\

\section{Conclusion and Future Work}
\label{sec-conclusions}
 
Bin Packing with Usage Costs (BPUC) problem can be viewed as a core subproblem of many optimisation problems related with workload consolidation in data centres. 
 The main contribution of this paper is  the study of various lower bounds and exact formulation for BPUC. Firstly, the value of the linear relaxation of a basic LP model for BPUC can be easily computed. Secondly, we show that this bound can be strengthened with filtering algorithms reasoning on the minimum/maximum load of each bin and that the resulting CP approach can efficiently handle relatively large instances.

\bibliographystyle{elsarticle-harv}

\end{document}